\theoremstyle{plain}
\newtheorem{theorem}{Theorem}[section]
\newtheorem{lemma}[theorem]{Lemma}
\theoremstyle{definition}
\newtheorem{definition}[theorem]{Definition}
\theoremstyle{remark}
\newtheorem*{theorem*}{Theorem}
\newcommand{\rr}{\raggedright}
\newcommand{\tn}{\tabularnewline}
\definecolor{clGray}{rgb}{0.5,0.5,0.5}
\definecolor{clNote}{rgb}{08,0,0}
\newcommand{\ONESTM}{one--state Turing machine}
\newcommand{\ONESTMS}{one--state Turing machines}
\newcommand{\MCC}{$M_{cc}$}
\newcommand{\LANGCC}{$L_{cc}$}
\title{The Power of One-State Turing Machines}
\author{Marzio De Biasi}
\date{}
\begin{document}
\maketitle


\begin{abstract}
	At first glance, one--state Turing machines are very weak: the halting problem for them is decidable, and, without memory, they cannot even accept a simple one element language such as $L = \{ 1 \}$ . Nevertheless it has been showed that a one--state Turing machine can accept non regular languages. We extend such result and prove that they can also recognize non context--free languages, so for some tasks they are more powerful than pushdown automata.

\end{abstract}

\section{Introduction}

	A Turing machine with two states is able to simulate any Turing machine; hence we can build a two--states universal Turing machine (actually 2 states and 18 symbols are enough to achieve universality \cite{DBLP:journals/fuin/NearyW0}) hence the halting problem for them is undecidable. We dramatically clips their wings if we allow only one state: we get a device with no internal memory and the halting problem becomes decidable \cite{HermanHalt,saouter:inria-00074105}.  We cannot even distinguish between accepting and non--accepting states. Nevertheless we can relax the notion of acceptance and rejection of an input: a one--state Turing machine accepts a string $x$ if it halts; it rejects $x$ if it runs forever. Under this different notion of acceptance, a one--state Turing machine is more powerful than finite--state automata \cite{HERMAN1969353}. In \cite{KUDLEK1996241} Kudlek considers languages accepted by Turing machines with one state and three symbols, and with two states and two symbols. He proves that one state Turing machines with 3 symbols accept regular languages, except in one case a deterministic linear context-free language.
    
    We extend such result and prove that a one--state Turing machine is more powerful than a pushdown automata, too: it can recognize non context--free languages.
    
    In Section 2 we give some preliminary definitions, in Section 3 we build a \ONESTM{} that is able to \emph{compare} two numbers in different bases, in Section 4 we give a formal proof that the language recognized by such machine is not context--free.
    

\section{Definitions}

	We use the standard definition of Turing machine as a 7-tuple \cite{Sipser}:

\begin{definition}[Turing Machine]
\label{def:tm}
A Turing machine $M$ is a 7-tuple
	$\langle Q, \Sigma, \Gamma, q_0, q_{a}, q_{r}, \delta \rangle$ where
	$Q$ is the set of states, 
    $\Sigma$ is the input alphabet not containing the \emph{blank symbol} $\sqcup$,
    $\Gamma$ is the tape alphabet, $\sqcup \in \Gamma, \Sigma \subseteq \Gamma$,
	$q_0 \in Q$ is the initial state,
	$q_{a} \in Q$ is the accept halting state,
	$q_{r} \in Q$ is the reject halting state ($q_r \neq q_a$),
	$\delta : Q \setminus \{q_a, q_r\} \times \Gamma \to Q \times \Gamma \times \{L,R\}$
	is the transition function that given a non-halting state and the current symbol returns a new (possibly halting) state, a symbol that is written on the tape and the direction of the move ($Left$ or $Right$).
\end{definition}

The single \emph{head} of the Turing machine is placed on a \emph{tape} with an infinite number of \emph{cells} in both directions; at the beginning of the computation the state of the Turing machine is $q_0$, the head is placed on the leftmost symbol of the input string; the rest of the cells contain the blank symbol $\sqcup$.

	With two states and 18 symbols ($|Q| = 2, |\Gamma|=18$) we can build an universal Turing machine \cite{DBLP:journals/fuin/NearyW0}; we restrict our study to \ONESTMS{} so $|Q| = 1$ \cite{HERMAN1969353,saouter:inria-00074105}.
But with only one state available we cannot neither distinguish between an accepting and rejecting state nor between a non--halting and a halting state; so we must somewhat alter the standard definition of accept/reject:

\begin{definition}[One--state Turing machine]
A one state Turing machine is a 6-tuple:\\
	$\langle \{q\}, \Sigma, \Gamma, q, H, \delta \rangle$ where
    $q$ is the only state which is also the initial state,
    $\Sigma$ is the input alphabet not containing the \emph{blank symbol} $\sqcup$,
    $\Gamma$ is the tape alphabet, $\sqcup \in \Gamma, \Sigma \subseteq \Gamma$,
    $H$ is the set of \emph{halting symbols},
	$\delta : \Gamma \setminus H \to \Gamma \times \{L,R\}$
    is the transition function that given the current symbol under the head, returns the new symbol to be written and the direction of the move. If the current symbol is a halting symbol, the machine halts.
\end{definition}

We define the language recognized by a \ONESTM{}: 

\begin{definition}[Language recognized by a \ONESTM{}]
\label{def:langrec}
If $M$ is a \ONESTM{}, $x \in \Sigma^*$, we define the language recognized by $M$ the set: 
$$L(M) = \{ x \mid M \text{ halts on input x} \}$$.
\end{definition}

	We could also consider the variant in which the \ONESTM{} is equipped with  distinct  sets of \emph{accept} and \emph{reject} halting symbols. In this case the behavior can be: halt and accept, halt and reject or run forever. However it is straightforward to see that the results would also extend to this model.

	At first glance, a \ONESTM{} is very weak, indeed it cannot even recognize trivial regular languages:

\begin{theorem}
\label{thm:lowpow}
No \ONESTM{} $M$ can recognize the trivial regular language $L = \{ 1 \}$.
\end{theorem}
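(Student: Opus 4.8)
The plan is to argue by contradiction: suppose some one--state Turing machine $M$ recognizes $L = \{1\}$. Since a one--state machine has no internal memory, its entire behaviour is dictated by the transition function $\delta : \Gamma \setminus H \to \Gamma \times \{L,R\}$ together with the contents of the tape under the head. The two inputs I want to play off against each other are the accepted string $1$ and the rejected empty string $\varepsilon$ (whose tape is all blanks), together with a longer rejected string such as $11$. Because $1 \in L$, $M$ must halt on input $1$; because $11 \notin L$ and $\varepsilon \notin L$, $M$ must run forever on each of these. I would first examine the very first step of the computation on input $1$: the head reads the symbol $1$ and either $1 \in H$ (the machine halts immediately) or $\delta(1)$ is defined and the head moves off the single $1$ onto a blank cell.

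The key structural observation I would exploit is that the configuration of a one--state machine is determined solely by the current tape symbol and the direction history, and in particular the behaviour on blank cells is uniform. First I would show that we may assume $1 \notin H$: if $1 \in H$ the machine halts on \emph{any} input that begins with $1$, hence it would accept $11$, a contradiction. So $\delta(1)$ is defined, say $\delta(1) = (a, D)$ for some symbol $a$ and direction $D$, and after one step the head sits on a blank cell adjacent to the (rewritten) single cell. From this point the machine's trajectory depends only on how $\delta$ acts on blanks and on the symbol it previously wrote. The crux is then a simulation/indistinguishability argument: I would compare the computation of $M$ on input $1$ with its computation on input $11$, showing that after the machine has processed the first $1$ the local picture it faces is essentially the same, so that whatever eventually causes $M$ to halt on $1$ must also be triggered (perhaps after some bounded delay) on $11$, forcing $M$ to accept $11$ as well.

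The main obstacle, and the step requiring the most care, is making this indistinguishability precise: the second $1$ in the input $11$ genuinely differs from the blank that $M$ sees in the corresponding cell when reading $1$, so I cannot literally claim the two computations are identical. The clean way around this is to track the head's excursions. A one--state machine that halts must do so by eventually reading a halting symbol; with a single non--blank input symbol, I would argue that the halting event on input $1$ occurs either still over the original cell (before the head ever needs to distinguish the region where the extra $1$ sits) or after the head has wandered into the all--blank region, where the two inputs present identical tapes. In either case I can transplant the halting computation to the input $11$ (or to $\varepsilon$, depending on where the head goes) and derive that $M$ halts there too, contradicting $L(M) = \{1\}$. I would formalize this by a case split on the direction $D$ of the first move and on whether the head ever returns to visit the cell that distinguishes $1$ from $11$, keeping a careful bookkeeping of the finitely many symbols $\delta$ can write. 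Once the transplantation is set up, the contradiction is immediate, and the theorem follows.
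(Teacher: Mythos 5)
Your overall strategy coincides with the paper's (argue by contradiction, split on the first move, then transfer the halting computation on $1$ to the rejected string $11$), but the crux of your plan --- the dichotomy that the halting event on input $1$ occurs ``either still over the original cell, before the head ever visits the region where the extra $1$ sits, or after the head has wandered into the all-blank region where the two tapes agree'' --- is false, and the cases it misses are exactly the hard ones. Concretely, take a machine with $\delta(1)=(a_1,R)$, $\delta(\sqcup)=(a_\sqcup,L)$ and $a_1$ a halting symbol: on input $1$ the head steps right onto the blank at position $1$ (precisely the cell that distinguishes $1$ from $11$), rewrites it, bounces back, and halts over the original cell --- but only \emph{after} having read the distinguishing cell. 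This computation lies in neither of your two categories, and it cannot be transplanted verbatim to input $11$: there the machine reads a $1$ at position $1$ instead of a blank and follows a genuinely different, longer trajectory. Likewise, with $\delta(a_1)=(a_1',R)$ and $a_\sqcup$ halting, the machine halts \emph{on} the distinguishing cell itself.

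Two mechanisms, which your proposal never identifies, are what close these cases (and they are what the paper's explicit case analysis supplies). First, the direction of $\delta(\sqcup)$ is forced: if on input $1$ the head ever enters the blank region on a side where $\delta(\sqcup)$ moves further away from the input, it runs forever over fresh blanks, contradicting the assumption that $M$ halts on $1$; hence the blank transition must bounce the head back toward the input, and this turns that blank region into a wall confining the halting computation. Second, for the computations above you need a \emph{shift} argument rather than a transplant: because the machine has no internal state, the second $1$ of $11$ triggers exactly the same behaviour as the first, so the computation on $11$ is a one-cell right-translate of the computation on $1$, and the confinement just established guarantees the translated computation never reaches the single cell where the two tapes disagree, hence it also halts. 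Without these two ingredients your ``careful bookkeeping'' cannot be completed: once the head re-reads cells it has rewritten, a naive case split ranges over arbitrarily many new symbols and never terminates. As it stands, the proposal has a genuine gap at its central step.
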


\begin{proof}
Suppose that $M$ recognizes $L = \{1\}$; we analyze its behavior on input $x=1$. At the beginning of the computation the head is on the symbol $1$, if it is a halting symbol then $M$ halts, but it will also halt and accept on any string $x = 1^+$. If it is not a halting symbol, either $\delta(1) = (a_1,L)$ \emph{(i)} or $\delta(1) = (a_1,R)$ \emph{(ii)}, $a_1 \in \Gamma$.
\begin{itemize}

\item[(i)] Suppose that $\delta(1) = (a_1,L)$: now the head is on a blank symbol; if it is halting, then again it will also halt and accept on any $x = 1^+$; it cannot move left again otherwise it will never halt; so it must move back to the right, i.e. $\delta(\sqcup)=(a_{\sqcup}, R), a_{\sqcup} \in \Gamma$. At this point the head is on symbol $a_1$ (written on the first step); if $\delta(a_1) = (a_1',R)$ then the head will end on a blank symbol on the right part of the tape and it will start an infinite sequence of right moves. So $a_1$ must be halting or move the head to the left: $a_1 \in H \lor \delta(a_1) = (a_1', L)$; in either cases $M$ would also halt on any string $x = 1^+$.

\item[(ii)] If $\delta(1) = (a_1,R)$ the behavior is similar: the head falls on a blank symbol and it must halt or jump back. If it halts, then $M$ would halt on any $x = 1^+$ after a sequence of right moves. If it jumps back, it cannot move left again on $a_1$ otherwise it would fall on a blank symbol on the left part of the input moving forever. So we must have $\delta(\sqcup) = (a_{\sqcup},L)$ and $a_1 \in H \lor \delta(a_1) = (a_1', R)$ ; but such transitions cause $M$ to halt and accept also all $x = 1^+$.
\end{itemize}

In all cases $L(M) \setminus \{1\} \neq \emptyset$ leading to a contradiction.
\end{proof}

Nevertheless, as showed in \cite{HERMAN1969353}, there are \ONESTMS{} that can recognize non--regular languages, i.e. some of them are more powerful than finite--state automata. 

In the next sections we will prove that they are even more powerful.

\section{A powerful one--state Turing machine}

	A \ONESTM{} has no internal memory, but the symbols on the tape cells can be used to \emph{store} the following information:

\begin{itemize}
	\item cell has been visited $n \bmod k$ times ($k$ fixed);
	\item head has left the cell going to the right;
	\item head has left the cell going to the left.
\end{itemize}

	This information is limited, but it is enough to implement both an \emph{unary counter} and a \emph{binary counter}. Informally we define a Turing machine with this behavior:

\begin{itemize}
	\item the input is divided in two parts: the left part ($U^*$) is an unary counter; whenever the head finds a $U$ it marks it as \emph{counted} writing $C$ and bounces back to the right part;
    
    \item the right part ($0^*$) -- initially filled with $0$s -- is a binary counter: whenever the head finds a zero $0$ it transforms it to a $1$ and bounces back to the left part; when it finds a $1$ it marks it as a zero $Z$ and continues towards the right to ``propagate'' the \emph{carry}; when going back to the left part it will change $Z$ to $0$;
    
    \item at the (rightmost) end of the input there is the special symbol $h$ for which no transition is defined and causes the machine to halt.
    
\end{itemize}

Formally, we define a \ONESTM{} $M_{cc} = \langle \{q\}, \Sigma, \Gamma, q, \{h\}, \delta \rangle$ over a three symbols input alphabet
$\Sigma = \{ u, 0, h \}$;
the tape alphabet has eight symbols:
$\Gamma = \{ \sqcup, u, U, h, 0, 1, Z, C, B \}$;
the complete transition table $\delta : \Gamma \to \Gamma \times \{L,R\} $ is:

\begin{center}
\begin{tabular}{ c || c | c | p{6cm} } 
   Read & Write & Move & Description\\
  \hline
   $u$  &  $U$  & R    & \rr Move towards the center and prepare the unary counter \tn
   $0$  &  $1$  & L    & \rr Increment binary bit and go back left\tn
   $U$  &  $C$  & R    & \rr Mark as counted an element of the unary counter \tn
   $1$  &  $Z$  & R    & \rr Increment binary bit and propagate the carry \tn
   $Z$  &  $0$  & L    & \rr Restore the zero and continue to the left \tn
   $C$  &  $B$  & L    & \rr Skip $C$ and continue to the left \tn
   $B$  &  $C$  & R    & \rr Restore the $C$s and continue to the right \tn
   $\sqcup$ & $\sqcup$ & L & \rr Run left forever \tn
   $h$  &    &   & \rr Undefined (halt) \tn
   
\end{tabular}
\end{center}

Figure~\ref{fig:comp} shows an example of a computation on input $uuuu00h$.

\begin{figure}[h]
\centering
\begin{tabular}{c}
\begin{lstlisting}
 _ [u] u  u  u  0  0  h  _   : starting configuration
 _  U [u] u  u  0  0  h  _   : move to the center
 _  U  U [u] u  0  0  h  _ 
 _  U  U  U [u] 0  0  h  _   : bin counter is 00
 _  U  U  U  U [0] 0  h  _ 
 _  U  U  U [U] 1  0  h  _   : bin counter is 10
 _  U  U  U  C [1] 0  h  _   : unary counter is C
 _  U  U  U  C  Z [0] h  _ 
 _  U  U  U  C [Z] 1  h  _ 
 _  U  U  U [C] 0  1  h  _   : bin counter is 01
 _  U  U [U] B  0  1  h  _ 
 _  U  U  C [B] 0  1  h  _ 
 _  U  U  C  C [0] 1  h  _   : unary counter is CC
 _  U  U  C [C] 1  1  h  _   : bin counter is 11
 _  U  U [C] B  1  1  h  _ 
 _  U [U] B  B  1  1  h  _ 
 _  U  C [B] B  1  1  h  _ 
 _  U  C  C [B] 1  1  h  _ 
 _  U  C  C  C [1] 1  h  _   : unary counter is CCC
 _  U  C  C  C  Z [1] h  _ 
 _  U  C  C  C  Z  Z [h] _   : halt
\end{lstlisting}
\end{tabular}
\caption{Computation on input $uuuu00h$ (``\_'' represents the blank symbol $\sqcup$)}
\label{fig:comp}
\end{figure}

We call \LANGCC{} the language defined by the inputs on which the Turing machine \MCC{} halts (i.e. reach the halting symbol $h$ for which no transition is defined and doesn't run forever):

$$L_{cc}= \{ x \mid M_{cc}(x) \text{ halts}\}$$

We don't need to characterize exactly the language $L_{cc}$; but if we restrict the input to strings of the form $u^* 0^* h$, we can prove the following property:

\begin{theorem}
\label{thm:bin}
If $x \in \{ u^n 0^m h \mid n,m \geq 0 \}$ then $x \in L_{cc}$ if and only if $n \geq 2^{m}-1$.
\end{theorem}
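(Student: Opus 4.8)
The plan is to set up a configuration invariant indexed by a ``round counter'' $k$, prove it by induction, and then read off the two terminal behaviours. Writing $\beta_m(k)$ for the $m$-bit little-endian binary encoding of $k$ (least significant digit leftmost, matching the machine's orientation, as the trace's annotations ``$10$''$=1$, ``$01$''$=2$ confirm), I would define the configuration $\mathcal{C}_k$ to be the one whose tape reads $U^{\,n-k}\,C^{\,k}\,\beta_m(k)\,h$ surrounded by blanks, with the head on the leftmost cell of the binary block and about to read it. First I would dispose of the preprocessing phase: from the initial configuration the only applicable rule on $u$ is $u\to U$ (move right), so after $n$ steps the tape is $U^{n}\,0^{m}\,h$ with the head on the first $0$; since $0^{m}=\beta_m(0)$ and $U^{n}=U^{\,n-0}C^{0}$, this is exactly $\mathcal{C}_0$. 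The degenerate case $m=0$ is handled directly: the head rewrites $u^{n}$ to $U^{n}$ and then reads $h$, so $M_{cc}$ halts for every $n\ge 0$, in agreement with $2^{0}-1=0$.

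The technical core is two sub-lemmas describing a single round. \emph{Increment lemma}: starting with the head on the leftmost cell of a binary block holding $\beta_m(k)$ with $k\le 2^{m}-2$, the machine rewrites the block to $\beta_m(k+1)$ and returns the head to the rightmost unary cell (moving left); whereas if the block is $1^{m}$ (i.e.\ $k=2^{m}-1$) it rewrites $1$'s to $Z$'s rightward, reads $h$, and halts. I would prove this by a secondary induction on the number of leading $1$'s: each leading $1$ fires $1\to Z$ (R), the first $0$ fires $0\to 1$ (L), and the return sweep fires $Z\to 0$ (L) back across exactly those cells, which is precisely the little-endian successor with carry. \emph{Mark lemma}: starting with the head just left of the binary block on tape $\ldots U^{a}C^{b}\ldots$, if $a\ge 1$ the machine sweeps left rewriting the block of $b$ $C$'s to $B$'s ($C\to B$, L), marks the rightmost $U$ ($U\to C$, R), and restores the $B$'s ($B\to C$, R), ending on the leftmost binary cell with unary block $U^{\,a-1}C^{\,b+1}$; if $a=0$ it rewrites every $C$ to $B$, falls onto a blank, and then loops forever under $\sqcup\to\sqcup$ (L). Composing the two gives the round transition: provided $k\le 2^{m}-2$ (no overflow) and $n\ge k+1$ (a $U$ is available), $\mathcal{C}_k$ evolves to $\mathcal{C}_{k+1}$.

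With the round transition in hand, a straightforward induction shows that $\mathcal{C}_0$ evolves through $\mathcal{C}_1,\ldots,\mathcal{C}_{\min(n,\,2^{m}-1)}$, since for every $k$ in the relevant range both side conditions hold. The conclusion then splits on the hypothesis. If $n\ge 2^{m}-1$, the machine reaches $\mathcal{C}_{2^{m}-1}$, whose binary block is $1^{m}$, so the overflow branch of the Increment lemma fires and $M_{cc}$ halts. If instead $n\le 2^{m}-2$, the machine reaches $\mathcal{C}_n$, where all $U$'s are marked and the unary block is $C^{n}$; one further increment applies the non-overflow branch (legitimate because $n\le 2^{m}-2$ forces $\beta_m(n)\neq 1^{m}$), after which the $a=0$ branch of the Mark lemma fires and $M_{cc}$ runs left forever. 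Hence $x\in L_{cc}$ if and only if $n\ge 2^{m}-1$.

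The step I expect to be the main obstacle is the Increment lemma, specifically making the carry propagation rigorous: I must verify that the rightward $1\to Z$ sweep, the single $0\to 1$ turn, and the leftward $Z\to 0$ restoration together realize the little-endian successor and leave the head in exactly the position required to re-enter the Mark phase, and that an all-ones block is the \emph{unique} trigger for the halting $h$-read. The bookkeeping of head positions at the seam between the unary and binary blocks, together with the small edge cases $k=0$, $b=0$, and $n=0$, is where the argument is most error-prone, so I would isolate those inside the secondary induction rather than fold them into the main round induction.
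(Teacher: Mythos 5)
Your proposal is correct and takes essentially the same approach as the paper: the paper's proof is a two-sentence informal sketch (the left part ``behaves like a unary counter,'' the right part ``like a binary counter,'' each binary increment consumes one unary mark, so the head reaches $h$ iff $n \geq 2^m - 1$), and your argument is exactly that idea carried out rigorously. Your configuration invariant $\mathcal{C}_k$, the Increment/Mark lemmas, and the two terminal branches (overflow onto $h$ versus falling off the left onto $\sqcup$) supply precisely the details the paper's ``by construction'' leaves implicit, and they check out against the machine's transition table, including the edge cases $m=0$, $n=0$, and $b=0$.
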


\begin{proof}
By construction the left part $u^n$ behaves like an unary counter; while the right part $0^m$ behaves like a binary counter. For every increment of the binary counter the unary counter is also incremented, so the head can reach the rightmost $h$ only if $n \geq 2^{m}-1$; if $n < 2^{m}-1$ then the head reaches the blank symbol on the left of the input and continues in that direction forever.
\end{proof}

We can say that, if the input is ``well-formed'', the \ONESTM{} can \emph{compare} a number represented in unary  with a number represented in binary.

Using a similar technique we could build a \ONESTM{} that implements an arbitrary $k$-ary counter or a \ONESTM{} that compare a $k$-ary number with a $k'$-ary number, i.e. it can compare two numbers written in different bases.

\section{Computational complexity}

We define the languages:

$$L_{u0h}  = \{ u^n 0^m h \mid n,m \geq 0 \}$$

$$L'_{cc} = \{ u^n 0^m h \mid n,m \geq 0 \land n \geq 2^{m}-1 \}$$

which, by Theorem~\ref{thm:bin}, are related to each other and to $L_{cc}$ in this way:

$$L'_{cc} \subseteq L_{u0h}$$
$$L'_{cc} \subseteq L_{cc} \cap L_{u0h}$$

We can prove that $L_{cc}$ is not context--free (CF) using the well known property of context--free languages: if $L$ is CF and $R$ is regular then $L\cap R$ is CF. 

The following is immediate:

\begin{lemma}
\label{lem:reg}
	$L_{u0h}= \{ u^n 0^m h \mid n,m \geq 0 \}$ is regular.
\end{lemma}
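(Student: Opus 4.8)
The plan is to produce an explicit finite description of $L_{u0h}$ and appeal to the closure/equivalence theorems for regular languages. The cleanest route is to observe that $L_{u0h}$ is precisely the language denoted by the regular expression $u^* 0^* h$: a string lies in $L_{u0h}$ exactly when it is a (possibly empty) block of $u$'s, followed by a (possibly empty) block of $0$'s, followed by a single terminal $h$, which is what this expression matches. Since every language denoted by a regular expression is regular, this settles the claim in one line.

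If one prefers an automaton-theoretic witness, I would instead exhibit a small deterministic finite automaton over $\Sigma = \{u,0,h\}$ with four live states plus a dead state. The start state $s_0$ loops on $u$; on reading $0$ it moves to $s_1$, which loops on $0$; from either $s_0$ or $s_1$, reading $h$ moves to the unique accepting state $s_a$; and every transition not of this shape (for instance a $u$ read in $s_1$, or any symbol read in $s_a$) is sent to the dead state. A short induction on the input length confirms that this DFA accepts exactly the strings $u^n 0^m h$ with $n,m \ge 0$, i.e. $L_{u0h}$, which again exhibits the language as regular.

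There is essentially no obstacle here, which is why the paper flags the lemma as immediate: the defining condition $u^n 0^m h$ imposes only an ordering constraint on the three symbols together with the requirement of a single trailing $h$, and no constraint relating the counts $n$ and $m$. It is precisely the absence of any comparison between $n$ and $m$ that keeps $L_{u0h}$ within the regular class, in contrast to the genuinely non-regular constraint $n \ge 2^m - 1$ that $L'_{cc}$ will later add. Consequently I would present whichever witness is most convenient, noting that the two are interchangeable via Kleene's theorem, and conclude that $L_{u0h}$ is regular.
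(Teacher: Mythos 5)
Your proof is correct: the regular expression $u^*0^*h$ (or equivalently the small DFA) witnesses regularity exactly as intended. The paper offers no explicit proof, dismissing the lemma as immediate, and your argument is precisely the standard justification it leaves implicit.
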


and using the pumping lemma for context--free languages we prove the following:

\begin{lemma} 
\label{lem:notcf}
   $L'_{cc}$ is not context--free.
\end{lemma}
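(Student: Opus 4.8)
The plan is to apply the pumping lemma for context--free languages to a carefully chosen \emph{boundary} string of $L'_{cc}$, namely one in which the unary count exactly meets the threshold $n = 2^m - 1$. Let $p$ be the pumping length guaranteed by the lemma and set $s = u^{2^p - 1} 0^p h$. Since $2^p - 1 \ge 2^p - 1$ this string lies in $L'_{cc}$, and its length $2^p + p$ exceeds $p$, so the lemma applies. To avoid a clash with the tape symbol $u$, I would write the guaranteed decomposition as $s = \alpha\beta\gamma\delta\varepsilon$ with $|\beta\gamma\delta| \le p$, $|\beta\delta| \ge 1$, and $\alpha\beta^{i}\gamma\delta^{i}\varepsilon \in L'_{cc}$ for every $i \ge 0$.

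First I would record two structural constraints forced by membership in $L'_{cc} \subseteq L_{u0h}$. Because every string of $L_{u0h}$ contains exactly one $h$, neither $\beta$ nor $\delta$ may contain $h$ (otherwise $i = 2$ produces two copies). Likewise, pumped strings must retain the shape $u^{*}0^{*}h$, so I only need to track how many of each symbol is duplicated. Let $a$ be the number of $u$'s and $b$ the number of $0$'s occurring in $\beta\delta$; then $a + b \ge 1$ and, crucially, $a \le |\beta\gamma\delta| \le p$. Pumping to exponent $i$ yields a string with $n_i = (2^p - 1) + (i-1)a$ occurrences of $u$ and $m_i = p + (i-1)b$ occurrences of $0$ whenever the shape is preserved; if it is not preserved, the string already fails to lie in $L_{u0h} \supseteq L'_{cc}$ and we are done.

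The argument then splits on whether any $0$ is pumped. If $b = 0$ (so $a \ge 1$), I pump \emph{down} to $i = 0$: deleting two contiguous blocks keeps the shape $u^{*}0^{*}h$, while $n_0 = 2^p - 1 - a < 2^p - 1 = 2^{m_0} - 1$, so the string is not in $L'_{cc}$, a contradiction. If $b \ge 1$, I pump \emph{up} to $i = 2$: membership would require $n_2 \ge 2^{m_2} - 1$, i.e. $2^p + a \ge 2^{p + b}$. Here lies the decisive estimate: since $b \ge 1$ the right-hand side is at least $2^{p+1} = 2\cdot 2^{p}$, whereas $a \le p < 2^{p}$ gives $2^p + a < 2^{p+1}$, so the inequality fails and again we reach a contradiction. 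The step I expect to require the most care is this final case, where the single pumped string must be shown to overshoot the exponential threshold: it is exactly the bound $a \le p$ on the amount that can be pumped, combined with the elementary inequality $p < 2^{p}$, that defeats the doubling of the binary counter. Since every admissible decomposition leads to a contradiction, $L'_{cc}$ is not context--free.
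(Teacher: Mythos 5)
Your proof is correct, and it uses the same pumping lemma and the same witness string $s = u^{2^p-1}0^p h$ as the paper; the difference lies in how the cases are organized and closed out, and your organization is genuinely tighter. The paper splits on the \emph{position} of the window $vwx$ (entirely inside the $u$-block, entirely inside the $0$-block, or spanning both), and its spanning case is settled only by an asymptotic argument --- pumping $n$ times and noting that $2^p - p + n(p-1) < 2^{n+p-1}-1$ for large enough $n$ --- carried out explicitly only for a ``worst case'' choice of $v$ and $x$. You instead split on the \emph{counts} $a$ (pumped $u$'s) and $b$ (pumped $0$'s): when $b = 0$ you pump down exactly as in the paper's first case, and when $b \geq 1$ a single pump to exponent $i = 2$ suffices, via the concrete chain $2^p + a \leq 2^p + p < 2^{p+1} \leq 2^{p+b}$. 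This merges the paper's second and third cases into one, replaces ``for large enough $n$'' with a one-shot inequality, and additionally disposes of decompositions the paper passes over in silence: those in which $\beta$ or $\delta$ contains $h$, and those in which pumping up destroys the shape $u^*0^*h$, which you correctly dismiss using $L'_{cc} \subseteq L_{u0h}$. Both arguments establish the lemma; yours is the more rigorous write-up of the same underlying idea.
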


\begin{proof}
  Suppose that $L'_{cc}$ is CF. Let $p \geq 1$ be the pumping length.
  We pick the string 

$$s = u^{2^p-1} 0^{p} h$$

which is in $L'_{cc}$.
By the pumping lemma $s$ can be written as $s = rvwxy$, with substrings $r, v, w, x$ and $y$, such that
\emph{i)} $|vwx| \leq p$, \emph{ii)} $|vx| \geq 1$, and \emph{iii)} $rv^n w x^n y \in  L'_{cc}$ for all $n \geq 0$. We can have the following cases:

\begin{itemize}

\item $vwx$ is entirely contained in $u^{2^p-1}$; in this case we can pump zero times ($n=0$) and we get the 
string $s' = u^k 0^p h$ with $k < 2^p - 1$, hence $s' \notin L'_{cc}$;

\item $vwx$ is entirely contained in $0^p$; in this case we can pump two times ($n = 2$) and we get the string $s' = u^{2^p-1} 0^j h$ with $p < j$ and $2^p-1 < 2^j-1$, hence $s' \notin L'_{cc}$;

\item $vwx$ spans between $u^{2^p-1}$ and $0^p$; in the worst case $v = u^{p-1}$ and $x=0^1$: 

$$ u^{2^p - p} (u^{p-1})^n (0^1)^n 0^{p-1}h \in L'_{cc}$$

should hold for any $n \geq 0$, but for large enough $n$ we have:

$$2^p - p + n (p-1) < 2^{n  + p - 1} - 1$$

which puts the string out of the language.
So we can conclude that $L'_{cc}$ is not context--free.
\end{itemize}
\end{proof}

We can finally conclude that the language recognized by the \ONESTM{} $M_{cc}$ is not context--free:

\begin{theorem}
\label{thm:onestmnotcf}
$L_{cc}$ is not context--free.
\end{theorem}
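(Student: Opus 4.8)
The plan is to argue by contradiction, exploiting the closure of the context--free languages under intersection with regular languages, which is the property flagged just before Lemma~\ref{lem:reg}. So I would begin by supposing that $L_{cc}$ is context--free and then form the intersection $L_{cc} \cap L_{u0h}$, where $L_{u0h}$ is regular by Lemma~\ref{lem:reg}. By the closure property, this intersection must itself be context--free.

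The crux of the argument is to identify exactly what this intersection is. A string lies in $L_{cc} \cap L_{u0h}$ precisely when it has the form $u^n 0^m h$ (membership in $L_{u0h}$) and $M_{cc}$ halts on it (membership in $L_{cc}$). By Theorem~\ref{thm:bin}, for strings of the form $u^n 0^m h$ the machine halts if and only if $n \geq 2^m - 1$, so the intersection is exactly
$$L_{cc} \cap L_{u0h} = \{ u^n 0^m h \mid n,m \geq 0 \land n \geq 2^m - 1 \} = L'_{cc}.$$
This upgrades the inclusion $L'_{cc} \subseteq L_{cc} \cap L_{u0h}$ noted earlier to an equality, which is what I actually need.

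Combining the two steps: under the assumption that $L_{cc}$ is context--free, the language $L'_{cc}$ would be context--free as well, contradicting Lemma~\ref{lem:notcf}. Hence $L_{cc}$ cannot be context--free, completing the proof.

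I do not expect a genuine obstacle here, since all the hard work has already been done: the non--context--freeness is isolated in the pumping--lemma argument of Lemma~\ref{lem:notcf}, and the only subtlety is making sure the intersection collapses to $L'_{cc}$ rather than merely containing it. That is exactly the content of Theorem~\ref{thm:bin}, so the verification of the equality is the one place to be careful, though it is immediate once the ``if and only if'' of that theorem is invoked.
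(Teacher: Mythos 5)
Your proposal is correct and follows exactly the paper's own argument: assume $L_{cc}$ is context--free, intersect with the regular language $L_{u0h}$, use the ``if and only if'' of Theorem~\ref{thm:bin} to identify the intersection as precisely $L'_{cc}$, and contradict Lemma~\ref{lem:notcf}. Your explicit remark that Theorem~\ref{thm:bin} upgrades the earlier inclusion $L'_{cc} \subseteq L_{cc} \cap L_{u0h}$ to an equality is exactly the subtlety the paper's proof relies on (and states without comment), so nothing is missing.
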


\begin{proof}
Suppose that $L_{cc}$ is CF. Then $L_{cc} \cap L_{u0h}$ is CF because $L_{u0h}$ is regular (Lemma~\ref{lem:reg}).
But $L_{cc} \cap L_{u0h} = L'_{cc}$ which is not CF (Lemma~\ref{lem:notcf}) leading to a contradiction.
\end{proof}

If $R$ is the class of regular languages, $CF$ is the class of Context--Free languages, and $TM_{1state}$ is the class of languages recognized by \ONESTMS{}; by Theorem~\ref{thm:lowpow} and Theorem~\ref{thm:onestmnotcf}, we have :

\begin{theorem}
$$R \not\subset TM_{1state},\quad\quad TM_{1state} \not\subset CF$$
\end{theorem}

\section{Conclusion}

	We proved that despite its limitations a \ONESTM{} can be used to implement a $k$-ary counter and ``compare'' two numbers written in different bases (the quotes are due to the non--standard definition of acceptance/reject of an input string). So the class of languages recognized by \ONESTMS{} is neither a superset of the regular languages nor a subset of the context-free languages.
	
	As a further investigation, we state the following problem: what is the minimal number $n$ of symbols needed by a one--state Turing machine to accept a non context-free language? It is known that $4 \le n \le 8$: $4 \le n$ proved in \cite{KUDLEK1996241} and $n \le 8$ proved in this article.

\bibliographystyle{abbrv}
\bibliography{onestm}

\end{document}